\documentclass[10pt,letterpaper]{article}
\usepackage[T1]{fontenc}
\usepackage[utf8]{inputenc}
\usepackage{amsmath,amssymb,amsthm,mathtools,bm}
\usepackage{newtxtext,newtxmath,microtype}
\usepackage[letterpaper,textwidth=5.5in,textheight=9in,centering,footskip=30pt]{geometry}
\usepackage{booktabs,tabularx,array,enumitem}
\usepackage{algorithm,algpseudocode,titlesec}
\usepackage{xcolor,hyperref,url}
\floatstyle{plain}
\restylefloat{algorithm}
\hypersetup{colorlinks=true,linkcolor=blue!45!black,citecolor=blue!45!black,urlcolor=blue!45!black,
 pdftitle={Beyond Reachability: When Edit Order
Preserves Couplings in Graph Flow Matching},
 pdfsubject={Theoretical preprint on shortest graph-edit routing and coupling recovery},
 pdfauthor={Zilin Li and Zixing Song}}
\setlist{nosep,leftmargin=*}
\titleformat{\section}{\normalfont\large\bfseries\raggedright}{\thesection}{1em}{}
\titlespacing*{\section}{0pt}{9pt}{6pt}
\titleformat{\subsection}{\normalfont\normalsize\bfseries\raggedright}{\thesubsection}{1em}{}
\titlespacing*{\subsection}{0pt}{9pt}{4pt}
\makeatletter
\renewcommand{\maketitle}{\begingroup\par\vspace*{0.08in}\hrule height 4pt\vspace{0.24in}%
\begin{center}{\fontsize{17}{20}\selectfont\bfseries\@title\par}\end{center}%
\vspace{0.25in}\hrule height 1pt\vspace{0.15in}%
\begin{center}\begin{tabular}[t]{c}\bfseries\rule{0pt}{24pt}\@author\end{tabular}\end{center}%
\vspace{0.1in}\endgroup}
\makeatother
\renewenvironment{abstract}{\par\vspace{0.07in}\begin{center}\large\bfseries Abstract\end{center}\vspace{-0.5ex}\begin{quote}\small\setlength{\parskip}{0pt}}{\end{quote}\vspace{0.6ex}\par}
\newcommand{\X}{\mathcal X}\newcommand{\E}{\mathbb E}\newcommand{\PP}{\mathbb P}
\newcommand{\TV}{\mathrm{TV}}\newcommand{\KL}{\mathrm{KL}}\newcommand{\Pois}{\mathrm{Pois}}
\newcommand{\dd}{\mathrm d}\newcommand{\calL}{\mathcal L}

\theoremstyle{plain}
\newtheorem{theorem}{Theorem}[section]
\newtheorem{proposition}[theorem]{Proposition}

\theoremstyle{definition}

\title{Beyond Reachability: When Edit Order Preserves Couplings in Graph Flow Matching}

\author{
\begin{tabular}{cc}
Zilin Li & Zixing Song \\
{\normalfont\mdseries\small tzulamlee@gmail.com} &
{\normalfont\mdseries\small szxpolyu@gmail.com}
\end{tabular}
\\[6pt]
Department of Data Science and Artificial Intelligence \\
The Hong Kong Polytechnic University
}
\date{}
\begin{document}
\maketitle
\begingroup
\renewcommand{\thefootnote}{\fnsymbol{footnote}}
\footnotetext[1]{Preprint, October 2026. Work in progress, not peer-reviewed.}
\endgroup
\vspace{-1.2em}

\begin{abstract}
Shortest-path optimality does not guarantee that a
memoryless graph generator preserves a prescribed
source--target correspondence. We study whether
reordering legal shortest edits can retain this
correspondence without changing the pairing,
mandatory edits, or sampling budget. Under a common
Poisson clock, we establish an exact criterion for
endpoint coupling recoverability: the source-blind
Markov projection attains the unavoidable clock-error
floor if and only if routes to distinct targets do
not share graph states. Routes to the same target
may merge without sacrificing endpoint fidelity.
This characterization leads to an exact finite-state
routing formulation and a legality-preserving
scheduling procedure with recovery certificates.
A marker-free construction demonstrates how shared
edge structure can retain source distinctions,
while a capacity-constrained family exhibits a
sharp separation in paired error between edit orders
with identical marginal error. Finally, a
target-conditioned aliasing bound connects structural
route design to the terminal fidelity of a learned
graph-only generator. These results characterize
when legal edit ordering can preserve a prescribed
coupling under fixed operational costs.
\end{abstract}

\section{Introduction}
A graph-to-graph transformation is correct only if it sends a particular source to its intended target. Nevertheless, many generative constructions are assessed by legality, edit distance, and agreement with the target graph \emph{marginal}. These requirements do not guarantee the specified source--target correspondence. A memoryless student, observing only the current graph and time, may merge trajectories from different sources and average incompatible next edits. Marginally correct generation can therefore coexist with incorrect paired transformations, even with a population-optimal predictor. That coupling information can be lost under Markovian projection is known from bridge and trajectory matching \cite{debortoli2023,zhang2024}.

There is, however, a more constrained design question. Suppose the pairing is fixed, every primitive edit is fixed, every path must remain legal and shortest, and both the sampling clock and the student's available state are unchanged. The only remaining freedom is the \emph{order} of the necessary edits. Different orders traverse different intermediate graph states, and those states are precisely what the memoryless student observes. Thus, an edit order may itself carry the information that the student is forbidden to store explicitly. The possibility is not universal: graph constraints may force distinct tasks through an information-erasing bottleneck. Nor is it enough to observe that orders influence training: learnable action orders and trajectory-conditioned discrete flows are already studied \cite{patel2026,kemper2026}. The question here is when edit order alone can recover a \emph{prescribed coupling} under unchanged operational costs.

We formulate this question as a constrained routing
problem in the space of legal graph states. The
central challenge is not merely whether prescribed
teacher paths are Markov-consistent, but whether
compatible shortest routes can be selected to
preserve the intended endpoint coupling. We establish
an exact criterion for when the source-blind Markov
projection attains the unavoidable clock-error floor,
connecting endpoint fidelity to target-separated
shortest routing. Building on this characterization,
we provide structural cut obstructions, an exact
finite-state routing formulation, and a
legality-preserving scheduling procedure that
certifies recovery when target conflicts vanish.
A four-edit construction demonstrates that recovery
does not require globally private edges, while a
capacity-constrained family shows that legal edit
ordering can produce a sharp separation in paired
accuracy without changing marginal error or
shortest-path length. Finally, a target-conditioned
error bound connects route ambiguity and learned-action
error to terminal coupling fidelity. Together, these
results establish a structural foundation for
understanding when legal edit ordering can preserve
source--target correspondence, and when graph
constraints make such preservation impossible.

\section{Paired transformations and the unavoidable clock error}
\label{sec:setting}
Throughout, we use $\mathbb P$ and $\mathbb E$ to denote
probability and expectation, respectively, and
$\mathcal L(Z)$ for the probability law of a random
variable $Z$. The Dirac measure concentrated at $z$
is denoted by $\delta_z$. For probability distributions
$P$ and $Q$ on a finite space $\Omega$, we define
their total variation distance and Kullback--Leibler
divergence as
\[
\|P-Q\|_{\TV}
=\frac12\sum_{x\in\Omega}|P(x)-Q(x)|,
\qquad
\KL(P\|Q)
=\sum_{x\in\Omega}P(x)\log\frac{P(x)}{Q(x)},
\]
with the usual conventions for zero probabilities.
For sets $A$ and $B$, we write $A\setminus B$ for
set difference, $A\triangle B$ for symmetric difference,
and $|A|$ for cardinality. The indicator of an event
is written as $\mathbf 1\{\cdot\}$. We use
$\Pois(\lambda)$ for the Poisson distribution with
mean $\lambda$, and $\bot$ for a null edit that leaves
the current state unchanged.

Let $E_*$ be a finite set of identifiable edges and let $\X\subseteq 2^{E_*}$ be any family of legal graph states. An edit adds or deletes one edge, provided its resulting state belongs to $\X$. We are given a deterministic pairing
\begin{equation}\label{eq:pair}
\pi=\sum_{i=1}^{m}p_i\delta_{(s_i,y_i)},\qquad
p_i>0,\quad \sum_i p_i=1,
\end{equation}
Let $(S,Y)\sim\pi$ denote the random source--target
pair. The sources $s_i$ are distinct, so $S=s_i$
implies $Y=y_i$, although different sources may
share the same target. Write $D_i=s_i\setminus y_i$, $I_i=y_i\setminus s_i$, and $K_i=|D_i|+|I_i|$. We require that each pair admit a \emph{legal monotone route}: a sequence $\gamma_i=(x_{i,0},a_{i,0},\ldots,a_{i,K_i-1},x_{i,K_i})$ from $s_i$ to $y_i$ that toggles each edge of $D_i\cup I_i$ exactly once, and no other edge. Here $x_{i,0}=s_i$ and $x_{i,K_i}=y_i$.
For $0\le j<K_i$, the state $x_{i,j}$ is the graph
after $j$ edits, and $a_{i,j}$ is the legal action
taking $x_{i,j}$ to $x_{i,j+1}$. This is precisely a legal shortest route because every edge in the symmetric difference must be toggled at least once. Monotone feasibility is automatic for downward-closed legal families, but downward closure is \emph{not} assumed in the results below. The source--target pairing, mandatory edits, and path lengths are held fixed; only their permissible ordering changes.

Let $N_t$ be a nonhomogeneous Poisson clock,
independent of $(S,Y)$, with intensity $b(t)>0$
almost everywhere on $[0,T]$ and cumulative
budget $u(t)=\int_0^t b(r)\,\dd r$, where
$B=u(T)\in(0,\infty)$. The route teacher starts at $s_i$ and, at each event, takes its next prescribed action; after reaching $y_i$ it applies the null action $\bot$. Thus
\begin{equation}\label{eq:route}
X_t^\gamma=x_{i,\min\{N_t,K_i\}},\quad
A_t^\gamma=\begin{cases}a_{i,N_t},&N_t<K_i,\\\bot,&N_t\ge K_i.\end{cases}
\end{equation}
All graph-only policies considered below use the \emph{same} independent clock, make at most one legal edge toggle per event, and may depend on the current graph and time but not on the original source or past trajectory. The probability of insufficient clock events gives
\begin{equation}\label{eq:floor}
\varepsilon_B=\sum_i p_i\,\PP\{\Pois(B)<K_i\}.
\end{equation}
For every such policy $X_T$, $\PP(X_T\ne y_i\mid S=s_i)\ge\PP(N_T<K_i)$: reaching $y_i$ requires at least $K_i$ toggles, regardless of the policy. As $\pi$ is deterministic given $S$, the paired total variation is precisely its mispairing probability, and hence
\begin{equation}\label{eq:lb}
D_{\mathrm{pair}}:=\|\calL(S,X_T)-\pi\|_{\TV}
=\sum_i p_i\PP(X_T\ne y_i\mid S=s_i)\ge\varepsilon_B.
\end{equation}
Equality in~\eqref{eq:lb} is our operational definition of \emph{clock-floor coupling recovery}. It does not assert exact recovery at finite $B$.

\section{A Structural Criterion for Coupling Recoverability}
\label{sec:theory}
The route teacher knows the source index and therefore its next edit. Its source-blind Markov projection instead uses the pooled next-action law
\begin{equation}\label{eq:q}
q_t(a\mid x)=\PP(A_t^\gamma=a\mid X_t^\gamma=x),
\end{equation}
at every teacher-occupied graph. It draws marks from $q_t(\cdot\mid x)$ at intensity $b(t)$ and starts from the same marginal source law. To see the marginal identity explicitly, put $p_t(x)=\PP(X_t^\gamma=x)$ and let $T_a(x)$ denote the graph after action $a$, with $T_\bot(x)=x$. Conditioning the teacher's probability flux on its current graph gives
\begin{equation}\label{eq:forward}
\dot p_t(y)=b(t)\left[\sum_{x,a:T_a(x)=y}p_t(x)q_t(a\mid x)-p_t(y)\right].
\end{equation}
These are precisely the forward equations of the source-blind projected process, with the same initial law. Uniqueness of the finite-state solution proves equality of their one-time marginals. The analogous identity conditional on $S$ uses $q_t^S$, not necessarily $q_t$, which explains why the pooled projection need not preserve the prescribed pairing \cite{debortoli2023,zhang2024}.

Let $\mathcal V_i=\{x_{i,0},\ldots,x_{i,K_i}\}$ be the set of graphs visited by route $i$. Two routes can share a graph without harming the pairing if their targets agree: although their next edits may differ, all legal shortest continuations towards that common target still reduce its edit distance. By contrast, a shared graph between routes with \emph{different} targets erases the distinction that the student needs at termination.

\begin{theorem}[Exact endpoint recoverability criterion]\label{thm:endpoint}
Under the preceding assumptions, the source-blind Markov projection attains the clock floor $D_{\mathrm{pair}}^{\mathrm{proj}}=\varepsilon_B$ if and only if the selected legal shortest routes satisfy
\begin{equation}\label{eq:separator}
\mathcal V_i\cap\mathcal V_j=\varnothing
\qquad\text{for every }i,j\text{ with }y_i\ne y_j.
\end{equation}
Thus routes to a common target may merge, while routes to distinct targets must be state-disjoint. The equivalence concerns the fixed-route \emph{Markov projection}; it is not a statement that an arbitrary separately designed student must imitate those routes.
\end{theorem}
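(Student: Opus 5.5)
The plan is to prove the two implications separately. Sufficiency rests on a monotonicity argument: under~\eqref{eq:separator} every pooled action moves toward the correct target. Necessity rests on a positive-probability "route-switching" event at a shared graph. Throughout, write $d(x,y)=|x\triangle y|$. The basic positivity fact I will use is this: for every route $i$, every index $0\le k\le K_i$ and every $t\in(0,T]$, the teacher occupies $x_{i,k}$ with positive probability. This holds because $\PP(S=s_i)=p_i>0$ and $\PP(N_t=k)>0$, given that $b>0$ a.e. Hence $q_t(\cdot\mid x)$ is well defined at every route state. Moreover, $q_t(a_{i,k}\mid x_{i,k})>0$ for $k<K_i$, and $q_t(\bot\mid y_i)>0$.

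\emph{Sufficiency.} Group the indices by target. For a target value $y$, let $\mathcal U_y=\bigcup_{i:y_i=y}\mathcal V_i$. By~\eqref{eq:separator} the sets $\mathcal U_y$ for distinct $y$ are disjoint. Fix $x\in\mathcal U_y$ and $t>0$. Any teacher mass at $x$ comes from a route $j$ with $y_j=y$, since only such routes visit $x$. There are two cases.
\begin{itemize}
\item If $x\ne y$, then $x$ is a non-terminal state of each such route: a monotone route toggles each edge once and so cannot revisit its endpoint early. The action $a_{j,k}$ it takes at $x$ toggles an edge of $x\triangle y$. It therefore lowers $d(\cdot,y)$ by one and lands in $\mathcal V_j\subseteq\mathcal U_y$.
\item If $x=y$, every route with target $y$ that sits at $x$ has terminated, so it emits $\bot$.
\end{itemize}
Consequently $q_t(\cdot\mid x)$ is supported on actions that lower $d(\cdot,y)$ by one when $x\ne y$, and on $\{\bot\}$ when $x=y$. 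Started from $s_i\in\mathcal U_{y_i}$, the projected chain never leaves $\mathcal U_{y_i}$. Each clock event decreases the distance to $y_i$ by exactly one until $y_i$ is reached, after which the chain stays there. Hence $X_T=y_i$ exactly on $\{N_T\ge K_i\}$. Summing with weights $p_i$ in~\eqref{eq:lb} gives $D^{\mathrm{proj}}_{\mathrm{pair}}=\varepsilon_B$.

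\emph{Necessity.} Suppose $x\in\mathcal V_i\cap\mathcal V_j$ with $y_i\ne y_j$, and write $x=x_{i,k}=x_{j,\ell}$. Since~\eqref{eq:lb} is a weighted sum of per-source inequalities, equality forces $\PP(X_T\ne y_i, N_T\ge K_i\mid S=s_i)=0$. I will exhibit a positive-probability event contradicting this. Take $M\ge\max\{K_i,\,k+K_j-\ell\}$ and consider the following event for the projected chain started at $s_i$:
\begin{itemize}
\item exactly $M$ clock events occur in $(0,T]$;
\item the first $k$ marks are $a_{i,0},\ldots,a_{i,k-1}$, which follow route $i$ to $x$;
\item the next $K_j-\ell$ marks are $a_{j,\ell},\ldots,a_{j,K_j-1}$, which follow route $j$ to $y_j$;
\item all remaining marks are $\bot$.
\end{itemize}
Conditional on the event times, each prescribed mark has positive conditional probability by the positivity fact. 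The set of event-time configurations has positive probability. On this event $N_T=M\ge K_i$ while $X_T=y_j\ne y_i$, so the strict inequality $D^{\mathrm{proj}}_{\mathrm{pair}}>\varepsilon_B$ follows.

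The main obstacle is making the positivity claim rigorous with time-inhomogeneous $q_t$. The step "each prescribed mark has positive probability" must hold for almost every event time, not just at a fixed $t$. I would handle this by writing the probability of the switching event as an integral over the ordered event times $0<\tau_1<\cdots<\tau_M<T$, against the Poisson density $\prod_r b(\tau_r)e^{-B}$, of a product of $M$ mark probabilities $q_{\tau_r}(\cdot\mid\cdot)$. Each factor is strictly positive for every $\tau_r>0$ by the positivity fact, and $b>0$ a.e., so the integral is positive. A secondary check is that $q_t$ is evaluated only at teacher-occupied graphs along this event. This is automatic, since every visited state lies on route $i$ or route $j$.
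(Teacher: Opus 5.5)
Your proof is correct. The sufficiency half is essentially the paper's argument: states carry a unique target label, every supported pooled action lowers $|x\triangle y|$ by one, and only $\bot$ is supported at the target. One small point: your parenthetical about a monotone route not revisiting its endpoint early is really the justification for the $x=y$ case, not the $x\ne y$ case.

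For necessity you take a genuinely different route from the paper.

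\textbf{The paper's argument.} It fixes an interior time $t$ and uses prefix positivity to place \emph{both} colliding sources $s_i$ and $s_h$ at the shared state $x$ at time $t$, possibly after different event counts. It then conditions on $\ell$ future events and invokes the Markov property of the projection together with independent Poisson increments. The future terminal law from $(x,t)$ is therefore the same for both sources, so it cannot be concentrated on $y_i$ and on $y_h$ at once.

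\textbf{Your argument.} You build an explicit witness from source $s_i$ alone: follow route $i$ to $x$, switch to route $j$'s continuation, then idle at $y_j$. You certify positivity by integrating the product of pooled mark probabilities against the event-time density $\prod_r b(\tau_r)e^{-B}$. This correctly covers the edge cases $K_i=0$, $x=y_j$, and $x=y_i$.

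\textbf{What each buys.} Your version avoids the conditioning-on-the-future step and identifies the wrong endpoint explicitly. Applied symmetrically, it also shows that \emph{each} of the two colliding sources individually exceeds its own floor. The paper's contradiction, as written, only shows that the two cannot both sit at their floors. In exchange, the paper's argument needs only each source's own prescribed prefix up to one fixed time, with no switched suffix and no choice of $M$.

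Both arguments rest on the same ingredient: $q_t(a_{i,k}\mid x_{i,k})>0$ and $q_t(\bot\mid y_i)>0$ for every $t>0$, which follows from $u(t)>0$.
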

\begin{proof}
Suppose~\eqref{eq:separator} holds. Every graph visited by a teacher route is then associated with a unique target $y$, although several sources may share that target. Every non-null action supported by $q_t(\cdot\mid x)$ is an action of a monotone shortest route to $y$, so it decreases $|x\triangle y|$ by exactly one. Its successor belongs to a route with the same target. At $x=y$ all supported actions are null. Starting from $s_i$, the projected process therefore reaches $y_i$ at exactly its $K_i$th clock event and remains there. Its conditional error is $\PP(N_T<K_i)$, proving equality in~\eqref{eq:lb}.

For necessity, suppose instead that $x=x_{i,j}=x_{h,k}$ for two routes with $y_i\ne y_h$. Every prescribed prefix has positive probability of being followed by the blind projection: at every occupied state and every interior time, its prescribed action has strictly positive pooled probability, because the corresponding Poisson route-position probability is positive. Hence, at any fixed $t\in(0,T)$, the blind process has positive probability of $(X_t,N_t)=(x,j)$ when started at $s_i$, and likewise of $(X_t,N_t)=(x,k)$ when started at $s_h$. These events may have different past event counts, but the \emph{future} marked process, conditional on the same graph $x$ at time $t$, has the same law for both sources and is independent of the past event count.

If the projection attained the lower bound in~\eqref{eq:lb}, its conditional error for \emph{each} source would equal $\PP(N_T<K_i)$, since each conditional error has that lower bound and all $p_i$ are positive. Consequently, after either positive-probability prefix, the future output would have to equal the respective target almost surely whenever the number $L=N_T-N_t$ of subsequent events satisfies $L\ge K_i-j$ or $L\ge K_h-k$, respectively. Condition on the same value $L=\ell\ge\max\{K_i-j,K_h-k\}$; this event has strictly positive probability. Given $(x,t,L=\ell)$, the future action law is identical for the two sources, yet the equality assumption would require it to be concentrated both on $y_i$ and on $y_h$. This contradicts $y_i\ne y_h$. Therefore any such intersection makes $D_{\mathrm{pair}}^{\mathrm{proj}}>\varepsilon_B$.
\end{proof}

Theorem~\ref{thm:endpoint} answers the \emph{endpoint} question, not merely the stronger demand that every source-conditioned action law be reproduced. For comparison, a deterministic graph-only action map can reproduce the full selected teacher routes if and only if routes sharing a state prescribe the same next action there. This follows directly by assigning that common action and inducting over clock events, and is the familiar Markov-consistency criterion. It is stronger than~\eqref{eq:separator} when several routes lead to the same target. The relevant structural object for coupling recovery is thus \emph{target-separated shortest routing}, rather than pathwise agreement of every edit decision.

\subsection{The legal-state graph and an obstruction}\label{sec:cuts}
For each task define its monotone interval
\begin{equation}\label{eq:interval}
\mathcal I_i=\{x\in\X:s_i\cap y_i\subseteq x\subseteq s_i\cup y_i\}.
\end{equation}
Direct each legal single-edge move from $x\in\mathcal I_i$ to $x'$ whenever $|x'\triangle y_i|=|x\triangle y_i|-1$. The resulting directed graph $H_i$ is acyclic; its $s_i$--$y_i$ paths are exactly the permissible shortest edit orders. Theorem~\ref{thm:endpoint} therefore gives an existence question entirely within the legal state space: can one choose one path in each $H_i$ so that paths to different targets have no common state?

A basic impossibility certificate follows immediately. If $m$ tasks have pairwise distinct targets and a set $C\subseteq\X$ of fewer than $m$ states intersects \emph{every} shortest route for \emph{every} task, then no route selection attains the clock floor. Each of the $m$ disjoint routes would require a different state of $C$. This is a structural obstruction to recovery, not a quantitative bound on the excess paired error. A sharp endpoint bound for a concrete bottleneck appears in Section~\ref{sec:impossible}.

\section{Certified Edit Scheduling under General Legality Constraints}
\label{sec:construction}
A useful construction should not require every task to own an exclusive edge, or the legal state network to be planar. We therefore give a general exact formulation and a feasible route-improvement algorithm. They operate on the same graph-legality predicate; their distinct guarantees are stated explicitly.

\subsection{An exact finite-state formulation}\label{sec:exact}
Write $\mathcal Y=\{y_i:1\le i\le m\}$ for the set of distinct targets. Form the union $W=\bigcup_i\mathcal I_i$ of legal monotone intervals, and let $F_i$ contain the directed edges of $H_i$. Introduce binary route-flow variables $f^i_{uv}$ for $(u,v)\in F_i$, and binary target-occupancy variables $z_{y,x}$ for $y\in\mathcal Y$ and $x\in W$. Require
\begin{align}\label{eq:flow}
\sum_{v:(x,v)\in F_i}f^i_{xv}-\sum_{u:(u,x)\in F_i}f^i_{ux}
&=\begin{cases}1,&x=s_i,\\-1,&x=y_i,\\0,&\text{otherwise},\end{cases}\\
 f^i_{uv}&\le z_{y_i,u},\quad f^i_{uv}\le z_{y_i,v},\notag\\
 \sum_{y\in\mathcal Y}z_{y,x}&\le1\quad(x\in W),\notag\\
 z_{y_i,s_i}&=z_{y_i,y_i}=1.\notag
\end{align}
The flow-balance equations in~\eqref{eq:flow} apply only when $K_i>0$; a task with $K_i=0$ has a zero-length route, enforced by its endpoint occupancy constraint alone. Variables on nonexistent arcs are absent. The formulation allows routes to a common target to share states; it forbids sharing between different targets.

\begin{proposition}[Exact route selection for an arbitrary legal family]\label{prop:integer}
The binary system~\eqref{eq:flow} is feasible if and only if some selection of legal monotone shortest edit routes attains $D_{\mathrm{pair}}^{\mathrm{proj}}=\varepsilon_B$. An integral solution yields such routes by following the positive flow for each task. This claim requires no downward closure, private markers, or planarity.
\end{proposition}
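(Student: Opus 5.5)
The plan is to reduce Proposition~\ref{prop:integer} to Theorem~\ref{thm:endpoint}. That theorem already shows that clock-floor recovery by the projection is equivalent to the state-disjointness condition~\eqref{eq:separator}. It therefore suffices to show that integral solutions of~\eqref{eq:flow} correspond to selections of monotone legal shortest routes, one per task, satisfying~\eqref{eq:separator}. The key structural fact is the remark in Section~\ref{sec:cuts}: each $H_i$ is acyclic, because every arc decreases $|x\triangle y_i|$ by one, and its $s_i$--$y_i$ paths are exactly the legal monotone shortest routes for task $i$.

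\emph{Routes to solution.} Suppose routes $\gamma_i$ satisfy~\eqref{eq:separator}. Each $\gamma_i$ toggles only edges of $D_i\cup I_i$, each exactly once, so its states lie in $\mathcal I_i$ and each step decreases the distance to $y_i$ by one. Hence $\gamma_i$ is an $s_i$--$y_i$ path in $H_i$.
\begin{itemize}
\item Set $f^i_{uv}=1$ exactly on the arcs of $\gamma_i$. The route visits distinct states, since the distance to $y_i$ strictly decreases, so flow balance holds.
\item Set $z_{y,x}=1$ iff $x\in\mathcal V_i$ for some $i$ with $y_i=y$. The coupling constraints then hold, and so do $z_{y_i,s_i}=z_{y_i,y_i}=1$.
\item $\sum_y z_{y,x}\le 1$ is precisely~\eqref{eq:separator}, since two distinct targets can both claim $x$ only if routes to different targets share $x$.
\item Tasks with $K_i=0$ have $s_i=y_i$ and only the endpoint constraint, which is consistent.
\end{itemize}

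\emph{Solution to routes.} Given a binary feasible solution, fix $i$ with $K_i>0$. The arcs with $f^i=1$ form a unit $s_i$--$y_i$ flow in the acyclic graph $H_i$. Standard flow decomposition gives one path $P_i$ from $s_i$ to $y_i$ using only positive arcs. The remaining positive arcs form circulations, which cannot exist in a DAG. More directly, start at $s_i$ and repeatedly follow a positive out-arc. Balance guarantees such an arc exists at every non-$y_i$ node entered. Acyclicity and finiteness force termination, and termination can only occur at $y_i$. This walk is the ``follow the positive flow'' rule of the statement.

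Next, $P_i$ is a legal monotone shortest route. Every arc of $H_i$ is a legal single-edge move within $\mathcal I_i$ that reduces the distance to $y_i$, so the path has length $K_i$. A distance-reducing toggle inside $\mathcal I_i$ flips an edge of $D_i\cup I_i$ toward its target value. Such an edge is never flipped again, so each of these edges is toggled exactly once.

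Finally, every state on $P_i$ is an endpoint of a positive arc, so it carries $z_{y_i,x}=1$; for $K_i=0$ this follows from the endpoint constraint. If $P_i$ and $P_j$ shared a state $x$ with $y_i\ne y_j$, then $z_{y_i,x}+z_{y_j,x}=2$, violating $\sum_y z_{y,x}\le 1$. So~\eqref{eq:separator} holds, and Theorem~\ref{thm:endpoint} yields $D^{\mathrm{proj}}_{\mathrm{pair}}=\varepsilon_B$.

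\emph{Remaining points.} No downward closure, planarity, or marker assumption enters anywhere. Only the definition of $\mathcal I_i$ via the legal family $\X$ and Theorem~\ref{thm:endpoint} are used, and that theorem assumed none of these properties.

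The main obstacle is the extraction step. One must ensure that extra positive flow arcs cannot create spurious occupied states that break the argument, and that the extracted path is genuinely monotone. Both issues are handled by acyclicity of $H_i$: the extracted path is already a valid route, and any extra positive arcs only add occupancy. Extra occupancy can only tighten the constraints, and we use the $z$ variables solely to certify disjointness along the extracted paths.
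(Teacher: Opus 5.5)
Your proof is correct and follows essentially the same route as the paper. Acyclicity of each $H_i$ forces the binary unit flow to be a single $s_i$--$y_i$ path, and the occupancy constraints enforce target separation. Theorem~\ref{thm:endpoint}, used in both directions, then converts this into clock-floor recovery, and the converse places unit flow on a target-separated route family and labels the visited states. Your additional checks, that each extracted path toggles every edge of $D_i\cup I_i$ exactly once and that zero-length tasks are handled, spell out details the paper states only briefly in its appendix.
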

\begin{proof}
Each $H_i$ strictly decreases the nonnegative distance to $y_i$ along every arc and hence is acyclic. Unit integral flow conservation from $s_i$ to $y_i$ therefore gives an $s_i$--$y_i$ path, without cycles or disconnected flow components. The occupancy constraints prevent paths with different targets from sharing a graph. Theorem~\ref{thm:endpoint} proves sufficiency. Conversely, set the flow variables to one on any target-separated shortest route family, zero elsewhere, and assign $z_{y,x}=1$ exactly when a route to target $y$ visits $x$. Every constraint is then satisfied.
\end{proof}

For finite graphs this is a complete decision procedure, but explicit $W$ can have exponentially many states in $|s_i\triangle y_i|$. Even with an efficient legality oracle, arbitrary constraints do not imply a polynomial-size state representation or a polynomial-time integer solver. The formulation is valuable as a small-instance optimum, a feasibility certificate, and a reference against which more scalable schedules can be compared.

\subsection{A marker-free construction at the capacity boundary}
The criterion does not rely on dedicated source identifiers. Let $R$ be any preserved collection of edges and let $a,b,c,d\notin R$ be distinct permitted edges. Consider
\begin{equation}\label{eq:cross}
\begin{aligned}
s_1=R\cup\{a,b\},&\quad y_1=R\cup\{c,d\},\\
s_2=R\cup\{a,c\},&\quad y_2=R\cup\{b,d\}.
\end{aligned}
\end{equation}
If the following states are legal (for instance under the edge-capacity constraint $|x|\le |R|+2$ with $R$ fixed), take the two routes, suppressing $R$:
\begin{equation}\label{eq:crosspaths}
\begin{aligned}
\{a,b\}&\to\{a\}\to\varnothing\to\{d\}\to\{c,d\},\\
\{a,c\}&\to\{c\}\to\{b,c\}\to\{b\}\to\{b,d\}.
\end{aligned}
\end{equation}
Both use exactly four mandatory edits, remain at or below their endpoint edge capacity, and have disjoint visited states; Theorem~\ref{thm:endpoint} yields clock-floor recovery. Yet $s_1\cup y_1=s_2\cup y_2=R\cup\{a,b,c,d\}$: no edge is globally private to either task. This example demonstrates distributed source distinction, without assuming that all legal constrained instances admit such routing.

\subsection{A legality-preserving scheduling procedure}\label{sec:algorithm}
In large spaces, storing every state in $W$ is often impractical. We use a legality-preserving local search procedure
to reduce structural conflicts among supplied
shortest routes, while keeping their endpoints,
mandatory edits, and lengths unchanged. Represent each route by its ordered list of mandatory edge toggles. Swapping two \emph{adjacent} edits $a_{i,j},a_{i,j+1}$ preserves the endpoint and the entire edit multiset. Only the intermediate graph after the first of these edits changes. The swap is legal if and only if this one new intermediate graph belongs to $\X$; all other states remain unchanged. Thus one legality-oracle query is sufficient to certify the modified route.

For a computable zero-loss certificate, let $n_y(x)$ be the number of selected routes to target $y$ visiting state $x$, and set
\begin{equation}\label{eq:collision}
C(\gamma)=\sum_{x\in\bigcup_i\mathcal V_i}
\ \sum_{\substack{y,z\in\mathcal Y\\y<z}} n_y(x)n_z(x).
\end{equation}
Here $<$ is any fixed order on the finite target set. Each term is nonnegative, and $C(\gamma)=0$ \emph{if and only if}~\eqref{eq:separator} holds. This score certifies clock-floor recoverability
when it vanishes. For positive values, however,
it neither estimates the endpoint TV error nor
guarantees that reducing the score reduces that error. Accepting only legal adjacent swaps with strictly decreasing $C$ terminates after finitely many moves, leaves every route shortest, and returns a certified clock-floor solution if it reaches zero. Otherwise it reports only a local optimum; no global recovery or approximation ratio is asserted.

\begin{algorithm}[htbp]
\centering
\begingroup
\setlength{\fboxsep}{8pt}
\setlength{\fboxrule}{0.5pt}
\fbox{%
\begin{minipage}{\dimexpr\linewidth-2\fboxsep-2\fboxrule\relax}
\caption{Legality-preserving coupling-aware edit scheduling}\label{alg:swap}
\begin{algorithmic}[1]
\Require Paired sources and targets; any legal monotone shortest routes; legality oracle $\mathrm{Legal}(x)$.
\State Build target-labeled state incidence counts $n_y(x)$ and compute $C$ from~\eqref{eq:collision}.
\Repeat
  \State $\mathrm{improved}\gets\mathrm{false}$.
  \For{each route $i$ and adjacent edit positions $j,j+1$}
    \State Compute the intermediate state $x'$ created by reversing these two edits.
    \If{$\mathrm{Legal}(x')$ and the exact updated $C$ is smaller}
      \State Apply the swap; update the incidence counts and $C$.
      \State $\mathrm{improved}\gets\mathrm{true}$; \textbf{break}.
    \EndIf
  \EndFor
\Until{not $\mathrm{improved}$ or $C=0$}
\State \Return legal shortest routes and the certificate $C$.
\end{algorithmic}
\end{minipage}%
}
\endgroup
\end{algorithm}

Any constraint expressible as a state-membership test can be used, including edge capacities, degree bounds, forbidden subgraphs, and connectivity, provided a legal monotone shortest initialization exists. Under downward-closed constraints, delete-first initialization is always valid. In a general legality family, an initial route can be found by search on the monotone interval $H_i$ or supplied by the application. Adjacent-swap local minima need not be globally optimal: the exact formulation~\eqref{eq:flow} remains the definitive existence test where the state space is manageable.

\section{When constraints force the wrong pairing}
\label{sec:impossible}
A target-separation obstruction is not merely a problem of imitating teacher actions. In some legal graph systems it imposes a measurable endpoint error even for a perfect population-optimal student.

\begin{theorem}[An arbitrary-size capacity separation]\label{thm:capacity}
Let $a_1,\ldots,a_m,b_1,\ldots,b_m$ be $2m$ distinct identifiable edges, $m\ge2$, with equiprobable paired tasks $s_i=\{a_i\}$ and $y_i=\{b_i\}$. Let $\X_C=\{x:|x|\le C\}$ and $\psi=\PP\{\Pois(B)\ge2\}=1-e^{-B}(1+B)$. At capacity $C=1$ every shortest route must delete $a_i$ before adding $b_i$, and the source-blind projection satisfies
\begin{equation}\label{eq:caplow}
D_{\mathrm{marg}}=1-\psi,\qquad D_{\mathrm{pair}}=1-\frac{\psi}{m}.
\end{equation}
At capacity $C\ge2$, the same pairs also admit legal add-first shortest routes, and for those routes
\begin{equation}\label{eq:caphigh}
D_{\mathrm{marg}}=D_{\mathrm{pair}}=1-\psi.
\end{equation}
Moreover, at fixed $C\ge2$, comparing delete-first with add-first changes \emph{only} the edit orders: all sources, targets, required actions, path lengths, and clock budgets remain identical. The paired-error gap is exactly $(1-1/m)\psi$ at every $B>0$.
\end{theorem}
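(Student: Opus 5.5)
Everything reduces to counting clock events, because each route has length $K_i=2$ and the law of $N_T$ is $\Pois(B)$. By~\eqref{eq:floor}, $\varepsilon_B=\PP\{\Pois(B)<2\}=1-\psi$.

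\textbf{Marginal error.} Define $D_{\mathrm{marg}}=\|\calL(X_T)-\calL(Y)\|_{\TV}$. By the forward-equation identity~\eqref{eq:forward}, the projected process has the same one-time marginals as the route teacher. So $D_{\mathrm{marg}}$ can be computed from the teacher alone, for either edit order.

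The teacher's terminal state is:
\begin{itemize}
\item $y_i$ with probability $\psi/m$ for each $i$;
\item one of the intermediate states ($\varnothing$, or $\{a_i,b_i\}$ for add-first) with total probability $\PP(N_T=1)$;
\item $s_i$ with probability $e^{-B}/m$ each.
\end{itemize}
None of the non-target states lies in the target support $\{\{b_i\}\}$. Hence the mismatch mass is exactly $1-\psi$ in both regimes. This gives the $D_{\mathrm{marg}}$ parts of~\eqref{eq:caplow} and~\eqref{eq:caphigh}.

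\textbf{Capacity $C=1$: routes are forced.} Any legal route from $\{a_i\}$ to $\{b_i\}$ toggles only $a_i$ and $b_i$. Adding $b_i$ first would produce $\{a_i,b_i\}$ of size $2>C$, so every route is $\{a_i\}\to\varnothing\to\{b_i\}$.

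I then compute the projection explicitly:
\begin{itemize}
\item At $\{a_i\}$ the pooled law~\eqref{eq:q} deterministically deletes $a_i$.
\item At $\{b_i\}$ it is null.
\item At $\varnothing$, the time-$t$ teacher mass there is symmetric across tasks, so $q_t(\text{add }b_k\mid\varnothing)=1/m$ for every $t$ and $k$.
\end{itemize}
Started from $s_i$, the blind process ends at $y_i$ exactly when $N_T\ge2$ and the second-event choice picks $k=i$. Later events stay at $\{b_k\}$, since the only supported action there is null. This has probability $\psi/m$, so $D_{\mathrm{pair}}=1-\psi/m$ by~\eqref{eq:lb}.

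The step needing care is showing that $q_t(\cdot\mid\varnothing)$ is uniform for all $t$ and that $\{b_k\}$ is absorbing under the projection. Both follow from the symmetry of the equiprobable tasks and the fact that only null actions are supported at targets.

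\textbf{Capacity $C\ge2$: add-first routes.} Take $\{a_i\}\to\{a_i,b_i\}\to\{b_i\}$; these states are legal. The visited sets $\{\{a_i\},\{a_i,b_i\},\{b_i\}\}$ are pairwise disjoint across $i$, and the targets are distinct. By Theorem~\ref{thm:endpoint}, the projection attains $D_{\mathrm{pair}}=\varepsilon_B=1-\psi$.

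For the comparison at fixed $C\ge2$: the delete-first routes remain legal, and their projection is computed exactly as in the $C=1$ case. That case used only the routes, not the capacity, so delete-first still gives $1-\psi/m$. The two orders share sources, targets, the action multiset $\{\text{del }a_i,\text{add }b_i\}$, the length $2$, and the clock.

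The gap is therefore $(1-\psi/m)-(1-\psi)=(1-1/m)\psi$. It is positive for every $B>0$ because $\psi>0$ and $m\ge2$.
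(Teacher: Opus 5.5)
Your proposal is correct and follows essentially the same argument as the paper: routes are forced to be delete-first at $C=1$, the pooled kernel at the shared empty state is uniform, so the correct-pair mass is $\psi/m$, Theorem~\ref{thm:endpoint} handles the state-disjoint add-first routes, and the delete-first projection is unchanged at fixed $C\ge2$. The only cosmetic difference is that you obtain $D_{\mathrm{marg}}=1-\psi$ from the teacher's marginals via the forward-equation identity, whereas the paper reads it directly off the projected terminal law; both give the same value.
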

\begin{proof}
All tasks require exactly two edits. At capacity one, adding first would violate legality, so all routes pass through $\varnothing$ after their first clock event. Since the common clock is independent of the uniform source, the pooled Markov policy at this state chooses each target edge $b_j$ with probability $1/m$. A run with fewer than two events cannot finish; after two or more events, it reaches a target, but that target matches the source with probability $1/m$. Thus the correct-pair mass is $\psi/m$, whereas the total mass on the target set is $\psi$. This proves~\eqref{eq:caplow}.

At capacity two, route $i$ may instead use $\{a_i\}\to\{a_i,b_i\}\to\{b_i\}$. Routes to different targets have no common state, so Theorem~\ref{thm:endpoint} gives paired error $1-\psi$. Their terminal marginal has mass $\psi/m$ at each target $\{b_i\}$ and no other target mass, giving marginal error $1-\psi$. Delete-first remains legal at capacity two and retains the previous error formula. These calculations establish both~\eqref{eq:caphigh} and the fixed-capacity order-only comparison.
\end{proof}

For the delete-first family, the target-conditioned aliasing budget defined below is $\mathfrak A_B^Y=(1-1/m)\psi$: the sole disagreement is at the common empty state, with conditional action-TV $1-1/m$. Thus the paired-risk bound in Theorem~\ref{thm:risk} is sharp for the population-optimal student. The example is a sharp separation rather than a claim that every capacity constraint causes failure. It also exhibits an important boundary: increasing edit length or supplying the source could change the answer, but neither intervention belongs to the fixed-cost, source-blind question studied here.

\section{Learning error and exact evaluation of a selected schedule}
\label{sec:learning}
A structural certificate gives the performance of the ideal pooled policy. To relate that certificate to a trained student, it is useful to measure ambiguity at the \emph{target} level, because differences between teacher routes to the same target need not affect the paired endpoint. Define
\begin{equation}\label{eq:qy}
q_t^y(a\mid x)=\PP(A_t^\gamma=a\mid X_t^\gamma=x,Y=y),
\quad
\mathfrak A_B^Y=\int_0^T b(t)\,\E_{Y,X_t^\gamma}
\bigl[\|q_t^Y(\cdot\mid X_t^\gamma)-q_t(\cdot\mid X_t^\gamma)\|_{\TV}\bigr]\,\dd t.
\end{equation}
This quantity is zero when target classes never share graph states; it does not penalize alternative valid edits among sources headed to a common target. For a legal student action law $q_{\theta,t}$, including the null action, let
\begin{equation}\label{eq:excess}
\mathfrak E_\theta=\frac1B\int_0^T b(t)\,\E_{X_t^\gamma}
\KL\bigl(q_t(\cdot\mid X_t^\gamma)\,\|\,q_{\theta,t}(\cdot\mid X_t^\gamma)\bigr)\,\dd t,
\end{equation}
with $\mathfrak E_\theta<\infty$ and the usual support condition for KL.

\begin{theorem}[Paired and marginal error guarantees]\label{thm:risk}
Under the assumptions of Section~\ref{sec:setting}, the source-blind student satisfies
\begin{align}\label{eq:risk}
\varepsilon_B\le D_{\mathrm{pair}}^\theta
&\le\varepsilon_B+\mathfrak A_B^Y+\sqrt{B\mathfrak E_\theta/2},\\
D_{\mathrm{marg}}^\theta
&\le\varepsilon_B+\sqrt{B\mathfrak E_\theta/2}.\notag
\end{align}
Here $D_{\mathrm{marg}}^\theta=\|\calL(X_T^\theta)-\sum_i p_i\delta_{y_i}\|_{\TV}$. If~\eqref{eq:separator} holds and the student exactly matches $q_t$, its paired risk equals $\varepsilon_B$.
\end{theorem}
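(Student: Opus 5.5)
The lower bound $\varepsilon_B\le D^\theta_{\mathrm{pair}}$ is already~\eqref{eq:lb}, since the student is a graph-only policy on the common clock. For the upper bounds we introduce an intermediate process: the \emph{target-conditioned projection}. Given $Y=y$, it starts from $\mathcal L(S\mid Y=y)$ and draws marks from $q_t^y(\cdot\mid x)$ at intensity $b(t)$. Applying the forward-equation argument of~\eqref{eq:forward} conditionally on $Y=y$ (legitimate because the clock is independent of $(S,Y)$) shows that this process has the same one-time marginals as the teacher given $Y=y$. In particular, at time $T$ it sits at $y$ exactly when the teacher does. Since $\pi$ is deterministic given $S$, the teacher's pair error given $Y=y$ is $\PP(N_T<K_S\mid Y=y)$, and averaging over $y$ shows that the target-conditioned projection has paired endpoint error exactly $\varepsilon_B$. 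Here paired error means $\PP(X_T\ne Y)$.

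Next we couple three processes on the same clock. All start from the same source $S$ and share the event times: the target-conditioned projection $Z$, the pooled projection $X$, and the student $X^\theta$. The comparison uses a path-space TV bound obtained from Girsanov for marked point processes with common intensity. Consider two Markov jump processes on the same clock that start identically and use mark kernels $r_t$ and $r'_t$. Their path laws differ in TV by at most $\int_0^T b(t)\,\E\|r_t(\cdot\mid Z_t)-r'_t(\cdot\mid Z_t)\|_{\TV}\,\dd t$, with the expectation taken along the first process. This follows by a maximal coupling of marks at each event, together with a union bound over events: the expected number of events in $\dd t$ is $b(t)\,\dd t$, and a mismatch at an event occurs with probability equal to the TV of the two mark laws. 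Because $Z_t$ has the teacher's law given $Y$, taking $r=q^Y$ and $r'=q$ gives exactly $\mathfrak A_B^Y$. We apply this conditionally on $Y$, which makes $(Y,Z_T)$ and $(Y,X_T)$ close in TV.

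For the student versus the pooled projection, we use the KL chain rule for marked Poisson processes with common intensity. The path KL equals $\int_0^T b(t)\,\E\,\KL(q_t\|q_{\theta,t})(X_t)\,\dd t=B\mathfrak E_\theta$. This is valid because $X_t$ has the teacher's pooled marginal by~\eqref{eq:forward}. Pinsker then gives path TV at most $\sqrt{B\mathfrak E_\theta/2}$. The subtle point, and the main obstacle, is that this student comparison is source-blind. We need the joint law of $(S,\text{path})$ to be close, not just the path law. This holds because both processes start from the same $S$ and $S$ affects the dynamics only through the initial state. Hence the KL of the joint law $(S,\text{path})$ equals that of the path given the initial state, averaged over $S$. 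If the initial graph alone does not identify $S$, we can pass through the initial state conditioning, since both processes evolve identically given the initial state. Either way the joint KL is still $B\mathfrak E_\theta$.

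Combining these with the triangle inequality for $\PP(X_T\ne Y)$ under couplings gives
$D^\theta_{\mathrm{pair}}\le \varepsilon_B+\mathfrak A_B^Y+\sqrt{B\mathfrak E_\theta/2}$.
For the marginal bound we skip $Z$. The pooled projection has terminal marginal equal to the teacher's, whose TV distance to $\sum_ip_i\delta_{y_i}$ is at most $\PP(N_T<K_S)=\varepsilon_B$. Adding the Pinsker term gives the second inequality. Finally, if~\eqref{eq:separator} holds then $q^y_t=q_t$ on every occupied state, so $\mathfrak A_B^Y=0$. If moreover $q_\theta=q$ then $\mathfrak E_\theta=0$, and the bounds collapse to $\varepsilon_B$, which is consistent with Theorem~\ref{thm:endpoint}.
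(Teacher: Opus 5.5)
Your proposal is correct and follows essentially the same route as the paper. Both use a target-aware intermediate process with kernel $q_t^Y$; a maximal coupling of marks with the compensator/union bound, giving $\mathfrak A_B^Y$; the marked-path KL chain rule with Pinsker for the student term, including the joint $(S,\text{path})$ law; and the triangle inequality, with the marginal bound obtained by skipping the intermediate process.

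The only local difference is how you show the intermediate process has paired error $\varepsilon_B$. You get it directly from the equality of its $(Y,X_T)$ law with the teacher's. The paper instead argues pathwise that every action supported by $q_t^y$ lowers $|x\triangle y|$ by one, so each source $s_i$ reaches $y_i$ at exactly its $K_i$th event. That is a slightly stronger per-source statement, though it is not needed for the theorem.
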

\begin{proof}
Construct a target-aware process with the same initial pairs $(S,Y)$ and clock, but with action law $q_t^Y$. By the finite-state forward equation, its joint $(Y,X_t)$ marginals equal those of the route teacher: the transition flux conditional on $Y=y$ is obtained by averaging the teacher's source-specific action fluxes over its occupied states. Furthermore, every action supported by $q_t^y$ decreases Hamming distance to $y$, so from any of its associated sources $s_i$, this process reaches $y_i$ after exactly $K_i$ events and then remains there. Its paired error is therefore $\varepsilon_B$.

Couple this target-aware process and the source-blind projection with the same $(S,Y)$ and clock, and maximally couple their action marks until the first disagreement. The probability of disagreement by time $T$ is at most the mean integrated conditional mark-TV under the target-aware process. Its $(Y,X_t)$ occupancy matches the teacher's, so this mean is precisely $\mathfrak A_B^Y$. Therefore the paired TV between their terminal laws is at most $\mathfrak A_B^Y$. Next compare the blind projection with the student, using the same initial source and clock. Because the projected unconditional state occupancy equals the teacher's, the marked-path chain rule gives
\begin{equation}\label{eq:pathkl}
\KL(\mathbb P^{\rm proj}_{\rm marked}\|\mathbb P^\theta_{\rm marked})
=\int_0^T b(t)\,\E_{X_t^{\rm proj}}
\KL(q_t(\cdot\mid X_t^{\rm proj})\|q_{\theta,t}(\cdot\mid X_t^{\rm proj}))\,\dd t
=B\mathfrak E_\theta.
\end{equation} Pinsker's inequality bounds the TV of their $(S,X_T)$ terminal laws by $\sqrt{B\mathfrak E_\theta/2}$. The triangle inequality gives the paired upper bound. Marginal consistency of the blind projection with the teacher and direct application of the latter KL bound give the marginal upper bound. Finally, the clock lower bound is~\eqref{eq:lb}.
\end{proof}

For any \emph{fixed} schedule, the endpoint risk can also be evaluated directly, without relying on an upper bound. Work in cumulative clock time $u\in[0,B]$. On route $i$, the occupancy weight at position $j$ is
\begin{equation}\label{eq:weights}
w_{ij}(u)=\begin{cases}e^{-u}u^j/j!,&j<K_i,\\
\PP\{\Pois(u)\ge K_i\},&j=K_i.
\end{cases}
\end{equation}
Consequently, the projected kernel at a teacher-visited graph is explicitly
\begin{equation}\label{eq:kernel}
q_u(a\mid x)=\frac{\sum_{i,j:x_{i,j}=x}p_i w_{ij}(u)\,\mathbf1\{a_{i,j}=a\}}{\sum_{i,j:x_{i,j}=x}p_i w_{ij}(u)},\quad a_{i,K_i}=\bot,
\end{equation}
where any value may be assigned at states with zero denominator. Define $r_i(u,x)=\PP(X_u^{\rm proj}=x\mid S=s_i)$, initialized by $r_i(0,x)=\mathbf1\{x=s_i\}$. With $T_a(x)$ denoting the legal successor and $T_\bot(x)=x$, solve the finite forward system
\begin{equation}\label{eq:ode}
\frac{\dd r_i(u,x)}{\dd u}
=\sum_{z,a:T_a(z)=x}r_i(u,z)q_u(a\mid z)-r_i(u,x).
\end{equation}
Every successor with positive $q_u$ is a state on a selected teacher route, so the system closes on the finite union of their visited graphs. The exact projected paired risk is
\begin{equation}\label{eq:exactrisk}
D_{\mathrm{pair}}^{\rm proj}=1-\sum_i p_i r_i(B,y_i).
\end{equation}
Equations~\eqref{eq:kernel}--\eqref{eq:exactrisk} supply a task-aligned objective for evaluating or selecting legal orders when $C>0$, and an independent numerical check on the theoretical bounds. They do not imply that globally optimizing edit order is computationally easy.

\section{Relation to prior work and applicability}\label{sec:limits}
Coupling loss under Markovian projection and the use of initial-point augmentation are established in \emph{Augmented Bridge Matching} \cite{debortoli2023}; conditional flow consistency and trajectory information are studied in \emph{Trajectory Flow Matching} \cite{zhang2024}. Learned generation orders \cite{patel2026} and trajectory-conditioned discrete flow matching \cite{kemper2026} already treat action order as a modeling variable. Our question fixes the pairing, admissible primitive edits, minimal length, and clock, then examines which \emph{legal graph states} permit coupling recovery without source input. General discrete paths \cite{shaul2025}, graph edit flows \cite{havasi2025}, and hard-constrained graph generation \cite{madeira2024} provide methodological context. Vertex-disjoint shortest routing is itself a classical combinatorial problem \cite{botea2021}. The binary flow formulation uses standard integer-programming tools; its role is to solve the specific clock-floor recoverability criterion, not to introduce a new optimization primitive.

Theorems~\ref{thm:endpoint} and~\ref{thm:risk} allow repeated targets and arbitrary legal-state predicates, but require deterministic paired targets, an observable graph representation with identifiable edges, a common source-independent positive clock, and the specified single-toggle monotone shortest routes. A stochastic one-to-many coupling, partially observed graphs, or a generator that explicitly accesses the source falls outside this theory. The finite-state exact solver is complete but can be expensive; the adjacent-swap scheduler preserves feasibility for arbitrary legality predicates but offers no unconditional optimality guarantee. Our capacity construction is an analytical example, not a neural-network benchmark. These boundaries identify the remaining empirical question: how often does legal edit ordering reduce paired error on real graph-transformation tasks, relative to random orders, standard order learning, and source-conditioned models?

\paragraph{Scope and limitations.}
Our analysis considers deterministic source--target
pairings on finite graph spaces with identifiable edges,
legal monotone shortest edits, and a common
source-independent Poisson clock. Within this setting,
the recoverability criterion is exact, and the proposed
scheduling procedures preserve edit legality and cost.
The exact formulation may require an explicit state
space, while local scheduling provides feasibility
and recovery certificates rather than a global
optimality guarantee. Extensions to stochastic
pairings, partially observed graph structures, and
more general editing dynamics remain natural
directions for future work.

\section{Conclusion}

This work establishes a structural perspective on coupling preservation in graph flow matching, showing how the ordering of legal shortest edits can determine whether a memoryless generator retains the prescribed source–target correspondence. We formulated coupling
recoverability as a legal graph-edit ordering problem
with fixed source--target pairings, editing costs,
and sampling budgets. Our analysis establishes an
exact structural criterion for attaining the
unavoidable clock-error floor, connecting endpoint
fidelity to target-separated shortest routing.
Building on this characterization, we developed
a certified routing formulation and a
legality-preserving scheduling procedure.
A marker-free construction demonstrates how shared
graph structure can retain source distinctions,
while a capacity-constrained family shows that
legal edit restrictions can force substantial
pairing errors even when the marginal TV error
remains unchanged. Finally, the paired-error
bound connects structural recoverability to the
behavior of a learned graph-only generator.
Together, these results establish a structural
foundation for understanding when edit ordering
can preserve source--target correspondence,
and when graph constraints prevent it.

\clearpage

\appendix
\section{Projection and clock identities}
\label{app:projection}

\paragraph{Route occupancies and conditional kernels.}
For the selected route $\gamma_i$ of length $K_i$, write
$u(t)=\int_0^t b(r)\,\dd r$. Independence of the clock from
$(S,Y)$ gives, for $0\le j\le K_i$,
\begin{equation}\label{eq:app-occupancy}
 w_{ij}(t)=\PP(X_t^\gamma=x_{i,j}\mid S=s_i)
 =\begin{cases}
 e^{-u(t)}u(t)^j/j!,&j<K_i,\\
 \PP\{\Pois(u(t))\ge K_i\},&j=K_i.
 \end{cases}
\end{equation}
The terminal formula also covers $K_i=0$, in which case
$w_{i0}(t)=1$. Along each monotone route, the states are distinct,
so the next action at an occupied $(s_i,x_{i,j})$ is the point mass
at $a_{i,j}$, with $a_{i,K_i}=\bot$. In particular, every route
position has positive occupancy for interior times $t\in(0,T)$.

\paragraph{Forward equations.}
Put $p_t^s(x)=\PP(X_t^\gamma=x\mid S=s)$ and let
$T_a(x)$ be the state after action $a$, with $T_\bot(x)=x$.
At occupied states define
$q_t^s(a\mid x)=\PP(A_t^\gamma=a\mid X_t^\gamma=x,S=s)$.
Conditioning the route teacher's event flux on its current state gives
\begin{equation}\label{eq:app-source-forward}
 \frac{\dd}{\dd t}p_t^s(z)
 =b(t)\left[
 \sum_{x,a:T_a(x)=z}p_t^s(x)q_t^s(a\mid x)
 -p_t^s(z)\right].
\end{equation}
The right-hand side is the finite-state forward equation of the
source-aware Markov process. Since the clock intensity is integrable,
its solution with initial distribution $\delta_s$ is unique.
Multiplying~\eqref{eq:app-source-forward} by
$\mu(s)=\PP(S=s)$ and summing over $s$ yields the pooled equation,
using
\[
 p_t(x)q_t(a\mid x)
 =\sum_s\mu(s)p_t^s(x)q_t^s(a\mid x),
 \qquad p_t(x)=\sum_s\mu(s)p_t^s(x).
\]
Thus the source-blind Markov projection has the teacher's
unconditional one-time graph marginals, without necessarily
preserving its source-conditioned marginals.

The same argument applies after conditioning on a target. With
$\nu(y)=\PP(Y=y)>0$, set
\[
 p_t^y(x)=\PP(X_t^\gamma=x\mid Y=y),\qquad
 q_t^y(a\mid x)=\PP(A_t^\gamma=a\mid X_t^\gamma=x,Y=y).
\]
The identity
\[
 \nu(y)p_t^y(x)q_t^y(a\mid x)
 =\sum_{i:y_i=y}\sum_{j=0}^{K_i}p_iw_{ij}(t)\,
 \mathbf 1\{x_{i,j}=x,\ a_{i,j}=a\}
\]
shows that the target-aware projection preserves the joint
$(Y,X_t)$ one-time marginals of the route teacher. This does not
assert preservation of the full source-conditioned path laws.

\paragraph{The clock floor.}
Every legal single-edge toggle changes at most one coordinate
of the edge-set representation. Hence at least
$K_i=|s_i\triangle y_i|$ events are required to reach $y_i$
from $s_i$, even for a policy not following the selected route.
Because the clock is independent of the source,
\[
 \PP(X_T\ne y_i\mid S=s_i)
 \ge\PP\{\Pois(B)<K_i\}.
\]
Taking the $p_i$-weighted average proves~\eqref{eq:lb}.
Conversely, if every event before arrival at $y_i$ decreases
$|X_t\triangle y_i|$ by one and $y_i$ is absorbing, equality
holds for source $i$.

\section{Details of endpoint recoverability}
\label{app:endpoint}

\paragraph{Positive support of prescribed prefixes.}
Fix $t\in(0,T)$ and a nonterminal route position
$x_{i,j}$, where $j<K_i$. By~\eqref{eq:app-occupancy},
\[
 q_t(a_{i,j}\mid x_{i,j})
 \ge\frac{p_iw_{ij}(t)}
 {\sum_{h,k:x_{h,k}=x_{i,j}}p_hw_{hk}(t)}>0.
\]
The denominator includes terminal positions, whose action is
$\bot$. For any $j\le K_i$, condition on exactly $j$ clock
events before time $t$. At each of their event times the required
next action has strictly positive probability under the pooled
kernel; induction therefore gives positive probability that the
blind projection started at $s_i$ follows the entire prescribed
prefix and occupies $x_{i,j}$ at time $t$. The assertion also
covers $j=0$ and a terminal position $j=K_i$.

\paragraph{Why different targets cannot share a state at the floor.}
Suppose $x_{i,j}=x_{h,k}=x$ but $y_i\ne y_h$. The preceding
argument provides, for each initial source, a positive-probability
prefix reaching $x$ at the same fixed interior time $t$, possibly
after different numbers of clock events. Assume that the blind
projection attains $D_{\mathrm{pair}}^{\mathrm{proj}}=\varepsilon_B$.
Each conditional error already has its individual clock lower
bound, and every $p_i$ is strictly positive. Consequently,
\begin{equation}\label{eq:app-no-extra-error}
 \PP(X_T^{\mathrm{proj}}\ne y_i,\ N_T\ge K_i\mid S=s_i)=0
 \qquad\text{for every }i.
\end{equation}
Now condition either prefix on exactly $\ell$ future clock events,
where $\ell\ge\max\{K_i-j,K_h-k\}$. This conditioning has
positive probability in both cases by independent Poisson
increments. Starting from the same state $x$ at the same time
$t$, the source-blind policy and future clock increments have
identical conditional laws, irrespective of the original source
or the number of preceding events. Yet~\eqref{eq:app-no-extra-error}
requires this common terminal law to put probability one on
$y_i$ in the first case and on $y_h$ in the second. As these
targets differ, this is impossible. This proves the necessity
in Theorem~\ref{thm:endpoint} without requiring equal route
depths or distinct targets for all tasks.

\paragraph{Shared targets and pathwise consistency.}
For a fixed target $y$, every action supported by
$q_t^y(\cdot\mid x)$ on a teacher-visited state $x\ne y$
reduces $|x\triangle y|$ by one; at $x=y$, only $\bot$
is supported. Thus all sources assigned to $y$ can follow
mixtures of one another's valid shortest continuations and
still complete in exactly their own $K_i$ events. This explains
why target separation is sufficient even when two routes with
the same target disagree about the next edge to edit. In
contrast, reproducing the \emph{individual teacher paths}
requires agreement of next actions at every shared state.

\section{Verification of route-selection certificates}
\label{app:certificates}

\paragraph{The binary routing formulation.}
Each directed arc of $H_i$ decreases the integer quantity
$|x\triangle y_i|$ by one, so $H_i$ is acyclic. For $K_i>0$,
a binary unit flow satisfying the balance constraints
in~\eqref{eq:flow} therefore decomposes into one directed
$s_i$--$y_i$ path: a disconnected positive-flow component
would contain a directed cycle, which is impossible. The path
has exactly $K_i$ arcs and consists only of legal mandatory
edits. For $K_i=0$, its single endpoint is represented directly
by $z_{y_i,s_i}=1$. The inequalities linking $f^i$ to
$z_{y_i,x}$ and the per-state capacity constraint ensure that
routes to distinct targets cannot share a state. Conversely,
any target-separated family defines a feasible binary solution
by setting its path arcs and target-labeled visited states to one.
This proves Proposition~\ref{prop:integer} including zero-length
tasks.

\paragraph{Adjacent swaps and termination.}
Suppose two consecutive mandatory edits toggle distinct edges
$e$ and $f$, and let $x$ be the state immediately before them.
Their final state is independent of their order:
$T_f(T_e(x))=T_e(T_f(x))$. Hence reversing the edits changes
only the intermediate state from $T_e(x)$ to $T_f(x)$; the
remainder of the route is unchanged. If $T_f(x)\in\X$, the
new route is legal and still uses exactly the required edits.
The score~\eqref{eq:collision} is a nonnegative integer and
vanishes precisely when no state is visited by routes with
different targets. Every accepted swap strictly decreases this
score, so the procedure terminates after finitely many accepted
swaps. A positive local minimum is neither an impossibility
certificate nor a bound on paired TV error.

\paragraph{Exact enumeration of the crossed example.}
Take $R=\varnothing$ and the capacity-two family in
\eqref{eq:cross}. Each task has two deletions and two additions.
Its first edit must be a deletion (two choices). Thereafter,
one may either delete the other old edge and insert the two new
edges in either order, or insert one new edge, delete the
remaining old edge, and insert the remaining new edge. This
gives $2(2+2)=8$ legal shortest orders per task. Directly
comparing the visited-state sets of all $8^2=64$ pairs gives
six state-disjoint choices. One of these six, written as signed
edit lists for the first and second tasks respectively, is
\[
 (-b,-a,+d,+c),\qquad (-a,+b,-c,+d).
\]
Both lists are legal at capacity two and lead to distinct
state sets; more generally, the routes displayed in
\eqref{eq:crosspaths} supply the direct certificate used in
the main text. These counts are finite combinatorial checks,
not learned-model experiments.

\section{Target aliasing, learned policies, and exact checks}
\label{app:learning}

\paragraph{Target-aware comparison.}
The target-aware projection starts from the same $(S,Y)$ and
uses $q_t^Y(\cdot\mid X_t)$. By the target-conditioned forward
identity in Appendix~\ref{app:projection}, its $(Y,X_t)$
one-time marginals equal those of the teacher. More strongly,
from any of its associated source states, each non-null edit
reduces distance to its fixed target by one, and the target
itself is absorbing. Hence it attains paired error exactly
$\varepsilon_B$, even if it switches between teacher routes
that share the same target.

Couple the target-aware process and the blind projection using
the same initial $(S,Y)$ and exogenous Poisson events. Until
the first mismatch, their graph states are equal, and maximal
coupling of the action marks has mismatch probability
$\|q_t^Y(\cdot\mid x)-q_t(\cdot\mid x)\|_{\TV}$.
If $\tau$ is the first mismatch time, the compensator and a
union bound give
\begin{align}\label{eq:app-alias-bound}
 \PP(\tau\le T)
 &\le\E\!\int_0^T b(t)\mathbf 1\{t<\tau\}
 \|q_t^Y(\cdot\mid X_t)-q_t(\cdot\mid X_t)\|_{\TV}\,\dd t\notag\\
 &\le\mathfrak A_B^Y.
\end{align}
The second inequality uses the known $(Y,X_t)$ marginal of
the target-aware process. Thus target aliasing controls the
extra paired error of the ideal blind projection.

\paragraph{Source-action versus target-action ambiguity.}
For deterministic source-conditioned paths, the law
$q_t^s(\cdot\mid x)$ is a point mass. Therefore the familiar
identity
\[
 \E_{S\mid X_t^\gamma=x}\!
 \|q_t^S(\cdot\mid x)-q_t(\cdot\mid x)\|_{\TV}
 =1-\sum_a q_t(a\mid x)^2
\]
holds at every occupied state. The target-conditioned law
$q_t^y$ need not be a point mass when multiple sources share
the same target. Accordingly, the bound in
Theorem~\ref{thm:risk} uses the target-conditioned discrepancy
$\mathfrak A_B^Y$ from~\eqref{eq:qy}, not the source-action
Gini expression.

\paragraph{Marked-path relative entropy.}
Include the initial source, the common clock times, and every
action mark, including $\bot$, in the compared path laws.
The exogenous clock has the same distribution in both models,
so it contributes no likelihood-ratio term. If
$\mathfrak E_\theta<\infty$, the student assigns positive
probability to every blind-projection-supported mark. The
conditional chain rule for the action marks, followed by the
Poisson compensator, yields
\begin{align}\label{eq:app-marked-kl}
 &\KL\bigl(\mathbb P_{\rm marked}^{\rm proj}
 \|\mathbb P_{\rm marked}^{\theta}\bigr)\notag\\
 &\quad=\int_0^T b(t)\,\E_{X_t^{\rm proj}}
 \KL\!\left(q_t(\cdot\mid X_t^{\rm proj})
 \|q_{\theta,t}(\cdot\mid X_t^{\rm proj})\right)\dd t
 =B\mathfrak E_\theta.
\end{align}
Here the last equality uses equality of the blind projection's
unconditional state marginals and the route teacher's.
Data processing to $(S,X_T)$, followed by Pinsker's inequality,
gives $\sqrt{B\mathfrak E_\theta/2}$. Combining this with
\eqref{eq:app-alias-bound} and the clock floor proves
Theorem~\ref{thm:risk}. The marginal bound follows by comparing
the blind projection directly with the student.

\paragraph{Finite-state and capacity checks.}
For a selected family of routes, compute $w_{ij}(u)$ from
\eqref{eq:weights}, form the pooled kernel~\eqref{eq:kernel},
and integrate the conditional forward equations~\eqref{eq:ode}
from the initial states to budget $B$. The resulting terminal
probabilities give~\eqref{eq:exactrisk}; states outside the
union of selected route states cannot be reached by the ideal
projection. At $u=0$, kernels at zero-occupancy states can be
specified arbitrarily, since the continuous clock places no
event at that single time.

For the uniform $m$-pair capacity example, set
$\psi=1-e^{-B}(1+B)$. Delete-first routes spend exactly one
event in the common empty state; on reaching it, the target
index is uniform. Therefore
\[
 D_{\rm pair}^{\rm delete}=1-\frac{\psi}{m},\qquad
 D_{\rm marg}^{\rm delete}=1-\psi.
\]
Add-first routes keep different targets separated, giving
$D_{\rm pair}^{\rm add}=D_{\rm marg}^{\rm add}=1-\psi$.
At the empty state, the target-conditioned mark law is a
point mass while the blind mark law is uniform on $m$ target
additions. Consequently,
\begin{equation}\label{eq:app-alias-sharp}
 \mathfrak A_B^Y
 =\left(1-\frac1m\right)\int_0^B ue^{-u}\,\dd u
 =\left(1-\frac1m\right)\psi,
\end{equation}
which equals the excess paired risk of delete-first ordering.
This verifies sharpness of the aliasing term when the student
matches the pooled kernel exactly.


\clearpage
\begin{thebibliography}{30}
\bibitem{botea2021}
A. Botea, M. Mattetti, A. Kishimoto, R. Marinescu,
and E. Daly.
\emph{Counting Vertex-Disjoint Shortest Paths in Graphs}.
Proceedings of the International Symposium on
Combinatorial Search, 12(1):28--36, 2021.
\bibitem{debortoli2023} V. De Bortoli, G.-H. Liu, T. Chen, E. A. Theodorou, and W. Nie. \emph{Augmented Bridge Matching}. arXiv:2311.06978, 2023.
\bibitem{zhang2024} X. Zhang, Y. Pu, Y. Kawamura, A. Loza, Y. Bengio, D. L. Shung, and A. Tong. \emph{Trajectory Flow Matching with Applications to Clinical Time Series Modelling}. NeurIPS, 2024.
\bibitem{patel2026} D. Patel, B. Rozonoyer, G. Pandey, T. Naseem, R. F. Astudillo, and A. McCallum. \emph{Insertion Based Sequence Generation with Learnable Order Dynamics}. ICML, PMLR 306, 2026.
\bibitem{kemper2026} N. Kemper, M. Lienen, D. L\"udke, and S. G\"unnemann. \emph{Trajectory Matching: Discrete Flow Matching with Structured Actions}. arXiv:2610.32602, 2026.
\bibitem{shaul2025} N. Shaul, I. Gat, M. Havasi, D. Severo, A. Sriram, P. Holderrieth, B. Karrer, Y. Lipman, and R. T. Q. Chen. \emph{Flow Matching with General Discrete Paths: A Kinetic-Optimal Perspective}. ICLR, 2025.
\bibitem{havasi2025} M. Havasi, B. Karrer, I. Gat, and R. T. Q. Chen. \emph{Edit Flows: Flow Matching with Edit Operations}. arXiv:2506.09018, 2025.
\bibitem{madeira2024} M. Madeira, C. Vignac, D. Thanou, and P. Frossard. \emph{Generative Modelling of Structurally Constrained Graphs}. NeurIPS, 2024.
\bibitem{guo2022} M. Guo, V. Thost, B. Li, P. Das, J. Chen, and W. Matusik. \emph{Data-Efficient Graph Grammar Learning for Molecular Generation}. ICLR, 2022.
\bibitem{ding1992} G. Ding, A. Schrijver, and P. D. Seymour. \emph{Disjoint Paths in a Planar Graph---A General Theorem}. SIAM J. Discrete Math., 5(1):112--116, 1992.
\bibitem{wang2019} Y. Wang. \emph{Maximum Integer Flows in Directed Planar Graphs with Vertex Capacities and Multiple Sources and Sinks}. SODA, 2019.
\bibitem{reconf2024} K. Gajjar, A. V. Jha, M. Kumar, and A. Lahiri. \emph{Reconfiguring Shortest Paths in Graphs}. Algorithmica, 86:3309--3338, 2024.
\end{thebibliography}
\end{document}